\theoremstyle{thmstyleone}%
\newtheorem{theorem}{Theorem}%  meant for continuous numbers
\newtheorem{propositionCounted}{Proposition} % Different numbering for theorems and propositions
\theoremstyle{thmstyletwo}%
\theoremstyle{thmstylethree}%
\newtheorem{definition}{Definition}%
\newcommand{\Si}{\mathcal{S}}
\newcommand{\Ui}{\mathcal{U}}
\def\diamond{\mathrel{%
  \ooalign{$$\cr\hss\lower.255ex\hbox{\Bigmath\char5}\hss}}} 
\def\diamondplus{\mathrel{%
  \ooalign{$+$\cr\hss\lower.255ex\hbox{\Bigmath\char5}\hss}}} 
 \def\diamondminus{\mathrel{%
  \ooalign{$-$\cr\hss\lower.255ex\hbox{\Bigmath\char5}\hss}}} 
\def\since{\mathcal{S}}
\def\until{\mathcal{U}} 
\font\Bigmath=cmsy10 scaled \magstep2
\title{On Syntactical Simplification of Temporal Operators in Negation-free MTL}
\date{\today}
\author*[1]{\fnm{Mathijs} \sur{van Noort}}\email{Mathijs.vanNoort@UGent.be}
\author[1]{\fnm{Femke} \sur{Ongenae}}\email{Femke.Ongenae@UGent.be}
\author[1,2]{\fnm{Pieter} \sur{Bonte}}\email{Pieter.Bonte@KULeuven.be}
\affil*[1]{\orgdiv{IDLab}, \orgname{Ghent University - imec},
\orgaddress{
% \street{Technologiepark-Zwijnaarde 126 126}, \city{Ghent}, \postcode{9052}, 
\country{Belgium}}}
\affil[2]{\orgdiv{Department of Computer Science}, \orgname{KU Leuven Campus Kulak}, \orgaddress{
% \street{}, \city{City}, \postcode{10587}, \state{State}, 
\country{Belgium}}}
\begin{document}

\maketitle

\section*{Abstract}\label{sec:Abstract}

Temporal reasoning in dynamic, data-intensive environments increasingly demands expressive yet tractable logical frameworks. Traditional approaches often rely on negation to express absence or contradiction. In such contexts, Negation-as-Failure is commonly used to infer negative information from the lack of positive evidence. However, open and distributed systems such as IoT networks or the Semantic Web Negation-as-Failure semantics become unreliable due to incomplete and asynchronous data. This has led to a growing interest in negation-free fragments of temporal rule-based systems, which preserve monotonicity and enable scalable reasoning.

This paper investigates the expressive power of negation-free MTL, a temporal logic framework designed for rule-based reasoning over time. We show that the ``always'' operators $\boxplus$ and $\boxminus$, often treated as syntactic sugar for combinations of other temporal constructs, can be eliminated using ``once'',  ``since'' and ``until'' operators. Remarkably, even the ``once'' operators can be removed, yielding a fragment based solely on ``until'' and ``since''. These results challenge the assumption that negation is necessary for expressing universal temporal constraints, and reveal a robust fragment capable of capturing both existential and invariant temporal patterns. Furthermore, the results induce a reduction in the syntax of MTL, which in turn can provide benefits for both theoretical study as well as implementation efforts.

% Theoretical implications include a deeper understanding of minimal operator requirements for temporal expressivity, while practical benefits point to more efficient and predictable reasoning systems. This work opens avenues for optimizing temporal logic frameworks and adapting them to real-world applications where negation is impractical or undesirable. 

\section{Introduction}\label{sec:Intro}
% \input{TempDatalog/Introduction_4}
% \section{Intro New}
% \input{TempDatalog/Introduction5}

Negation is a widely used construct in formal, rule-based reasoning frameworks. It allows for the expression of absence, exclusion, or contradiction, and is often relied upon in a variety of application domains. For instance, in financial systems~\citep{baldazzi2023reasoning}, sensor monitoring~\citep{doherty2015hdrc3}, or E-health services~\citep{decontext}, the ability to express that something did not occur, or is not present, is often necessary. Without some form of negation, it becomes difficult to formally capture notions such as a no-show guest, a missing data point, or the absence of a required condition.

In logical rule-based systems, a common way to support negation is through the Closed World Assumption (CWA)~\citep{abiteboul_foundations_db_1995}. Under this assumption, if a fact is not explicitly known to be true, it is assumed to be false. This type of negation, referred to as Negation-as-Failure (NAF) is often a reasonable simplification in systems where the data is centrally managed and assumed to be complete~\citep{abiteboul_foundations_db_1995}.

However, recent developments in information systems have shifted attention toward open and distributed settings, where the CWA becomes harder to maintain~\citep{hayes_2001, hochstenbach2024rdf}. Examples include the World Wide Web, where data is distributed across many sources, and Internet of Things (IoT) environments~\citep{marinier2015maintaining, tu_iot_2020}, where components may be numerous, heterogeneous, and subject to change. In such settings, assuming that all relevant information is available and up-to-date at all times is increasingly difficult to justify. This poses particular challenges for negation: the absence of a fact in one part of the system does not necessarily imply its global absence. For instance, a missing sensor reading may be due to transmission delay, device failure, or simply a temporary disconnection, rather than a true absence of the measured phenomenon. Similarly, in federated or decentralized systems, the lack of a record in one data source does not guarantee that the record does not exist elsewhere. These ambiguities undermine the reliability of NAF semantics, which depend on a complete and static view of the data. Moreover, the dynamic and asynchronous nature of distributed systems -- where data sources may evolve independently -- complicates the interpretation of negation and calls for more cautious reasoning strategies that take these dynamic and asynchronous aspects into account. As a result, frameworks for open systems often require explicit mechanisms to handle partial knowledge and delayed information, further challenging the traditional logical foundations of negation.

At the same time, the growing availability of real-time data 
has led to increased interest in temporal rule-based reasoning~\citep{bonte_languages_2025, bonte_grounding_2024}. Many phenomena of interest are not characterized by isolated facts, but by patterns that unfold over time. For example, ~\cite{santipantakis2018stream} observes the spatio-temporal positions of vessels in order to detect various types of vessel activities, and \cite{bellomarini2025temporal} manages and evaluates evolutions in financial markets, most notably company ownership.
This has motivated the development of temporal extensions to classical logic-based systems, based on either Linear Temporal Logic  (LTL) or Metric Temporal Logic (MTL). Some recent developments in this field include Linear Temporal Public Announcement Logic (LTPAL)~\citep{dehkordi_linear_2020}, Metric Spatio-Temporal Logic (MSTL)~\citep{leng_qualitative_2016}, LARS~\citep{beck_lars_2018} and DatalogMTL~\citep{brandt_querying_2018}.

In addition to the considerations of more open and distributed systems, negation also presents specific challenges in streaming environments, where data arrives incrementally and the system must reason over a continuously evolving state~\citep{della_valle_its_2009}. In such settings, the absence of a fact at a given moment does not necessarily imply its permanent absence -- it may simply not have arrived yet. This makes it difficult to apply classical NAF semantics, which rely on a fixed and complete dataset. To address this, various stratified or windowed negation approaches have been proposed~\citep{walega_datalogmtl_2021, ketsman_datalog_2020, cucala_stratified_2021}, where negation is restricted to certain layers of the program or applied only within bounded time intervals. These are, however, constructions that inherently treat negation differently, with good reason: entailment in First-order Logic (FOL) -- and consequently in many rule-based reasoning systems -- is undecidable~\citep{trakhtenbrot_recursive_1953}. By introducing negation, either via NAF or stratified negation, a reasoning framework risks undecidability and may compromise its reliability, especially when reasoning process must be both timely and robust in the face of incomplete information.

These challenges have led to a growing interest in identifying expressive yet tractable fragments of temporal rule-based systems that avoid negation altogether. In particular, researchers have explored `positive fragments' -- systems that exclude negation to preserve monotonicity and decidability. For example,~\cite{urbani2022chasing} study positive and stratified LARS programs, while work on LTL and MTL often focuses on negation-free fragments to enable efficient model checking~\citep{abiteboul_foundations_db_1995, bouyer_expressiveness_2010, basin_monitoring_2015}.

While such fragments are intuitively less expressive -- since negation allows us to express opposites and contradictions -- they remain able to express the same temporal expressions of their larger, negation-including counterparts: A recurring theme in temporal logic is the equivalence between constructs like ``always P'' and ``not once not P.'' For instance, in LTL, this is expressed as $\mathbf{G} \phi \equiv \neg \mathbf{F} \big( \neg \phi \big)$~\citep{finger_adding_1992}, and in LARS,   $ \Box \alpha \equiv \neg \diamond  \neg \alpha $~\citep{beck_lars_2015}. These equivalences suggest that temporal ``always'' operators can be syntactic sugar for expressions involving negation and ``once''.

This raises an important question: what happens to such constructs when negation is removed? Must we reintroduce ``always'' as a core operator of the syntax? And if we do not, does this limit the expressive power of the system?

\textbf{In this work, we show that in MTL}, negation is not indispensable in the syntax: \textbf{the ``always'' operators $\boxplus$ and $\boxminus$ can be eliminated despite the lack of negation}.
This result indicates that a meaningful class of temporal expressions -- specifically, those that describe properties hold continuously over  intervals -- can be expressed within negation-free fragments of MTL. This shows that the expressivity of negation-free MTL is not as limited as intuition might suggest. They can capture not only existential or event-based patterns, but also universal temporal expressions. Furthermore, it shows that even in negation-free fragments, MTL over bounded intervals only requires temporal operators $\since$ and $\until$.

From a practical perspective, the results imply that negation-free MTL frameworks only need to account for two temporal operators, compared to six. This significant reduction of syntactic complexity opens the door towards a slimmer and more maintainable codebase. 
% by avoiding negation, rule evaluation becomes monotonic, meaning that adding new data cannot invalidate previous conclusions.
By showcasing how certain uses of negation can be eliminated through rewriting without sacrificing expressivity, we illustrate a viable negation-free fragment of MTL exists, which adequately balances expressivity with achieving a scalable
reasoning system. 

The remainder of this paper is structured as follows;
We first familiarize the reader with the syntax and semantics of DatalogMTL, as defined in the literature, in Section~\ref{sec:background}. The theoretical rewritings are presented in Section~\ref{sec:Theorems}. We discuss the implications of our results in Section~\ref{sec:conclusion}.

% \section{Syntax}\label{sec:Syntax}
% \input{TempDatalog/Syntax_TempDatalog}

% \section{Semantics}
% \label{sec:Semantics}
% \input{TempDatalog/Entailment}

\section{Preliminaries}\label{sec:background}

MTL, introduced by~\cite{koymans_specifying_1990}, introduced a framework for quantitative temporal reasoning in real-time systems. It has since found its way into multiple reasoning systems that deal with temporal data or changing states, such as DatalogMTL~\citep{brandt_querying_2018}, Predictive MTL (P-MTL)~\citep{tiger2016stream} and Metric Equilibrium Logic (MEL)~\citep{cabalar_metric_2022}. We briefly recall the syntax and semantics as it has been adopted throughout the literature~\citep{koymans_specifying_1990, alur_benefits_1996, brandt_querying_2018}. Within the scope of this work, we opt for a model-theoretic semantics and   rational timeline $(\mathbb{Q}, <)$.

\begin{definition}\label{def:MTLmodel}
    Given a list of predicate symbols $\mathcal{P}$,
    a \textit{model} $M$ is a triple $(\mathbb{Q}, <, V)$ where $V$ is a mapping from time points in $\mathbb{Q}$ to a subset of predicates in $\mathcal{P}$.
\end{definition}

Within the scope of this work, we consider Bounded MTL (BMTL), which only considers bounded intervals in the subscript of temporal operators:

\begin{definition}\label{def:MTLformula}
    The set of temporal formulae is defined recursively as follows:
    $$ A \coloneqq p | \top | \neg A | A \land A | \boxplus_I A |\boxminus_I A | \diamondplus_I A | \diamondminus_I A | A \since_I A | A \until_I A $$

    where $I$ is an positive bounded interval over interval over T, i.e. $I = [i_1, i_2]$ with $i_1, i_2 \in T$ and $0 \leq i_1 \leq i_2$.
\end{definition}

BMTL can be considered a fragment of ``full'' MTL, which also considers intervals of the form $[0, +\infty[$ in the subscript. A summary of BMTL and several other fragments and respective theoretical properties can be found in~\cite{ouaknine_some_2008}.

Following a model-theoretic approach, we define the semantics of temporal formulae $A$ by Table~\ref{tab:semantics}, with $M$ an interpretation, $t$, $t'$ and $t''$ time points on timeline $T$ and $I$ a positive bounded interval within $T$.

\begin{table}[ht]
\centering
\renewcommand{\arraystretch}{1.4}
\begin{tabularx}{\textwidth}{@{}lX@{}}
\textbf{Atom} & \textbf{Semantics} \\
\hline
Predicate & $M, t \models p $ if $p \in V(t)$\\
True & $M, t \models \top$ for each $t$ \\
Negation & $M, t \models \neg A $ iff $M, t \models A $ does not hold\\
Conjunction & $M, t \models A_1 \land A_2$ iff  $M, t \models A_1 $ and  $M, t \models A_2$  \\
% False & $M, t \models \bot$ for no $t$ \\ --> No longer defined
Once (past) & $M, t \models \diamondminus_I A$ iff $M, t' \models A$ for at least one $t'$ with $t - t' \in I$ \\
Once (future) & $M, t \models \diamondplus_I A$ iff $M, t' \models A$ for at least one $t'$ with $t' - t \in I$ \\
Always (past) & $M, t \models \boxminus_I A$ iff $M, t' \models A$ for all $t'$ with $t - t' \in I$ \\
Always (future) & $M, t \models \boxplus_I A$ iff $M, t' \models A$ for all $t'$ with $t' - t \in I$ \\
Since & $M, t \models A_1 \mathbin{\mathcal{S}_I} A_2$ iff $M, t' \models A_2$ for at least one $t'$ with $t - t' \in I$ and $M, t'' \models A_1$ for all $t'' \in [t', t]$ \\
Until & $M, t \models A_1 \mathbin{\mathcal{U}_I} A_2$ iff $M, t' \models A_2$ for at least one $t'$ with $t' - t \in I$ and $M, t'' \models A_1$ for all $t'' \in [t, t']$ \\
\end{tabularx}
\caption{Semantics of formulae}
\label{tab:semantics}
\end{table}

What can occur is that two formulae appear to be different, but have the same semantics. An example thereof is $\boxplus_{[0, i]} A$ and $A \until_{[i, i]} \top$; the first says $A$ must hold from $t$ (now) up to $t + i$. The latter says $A$ must hold from $t$ up until when $\top$ holds in interval $[t+i, t+i]$. Since $\top$ holds for any time point, it also does so for $t+i$, meaning the latter states $A$ holds from $t$ up until $t+i$.
We refer to such formulae with identical semantics as \textit{equivalent}, formally defined as follows:

\begin{definition}\label{def:equivalence_metric_atoms}
    Two formulae $A_1$ and $A_2$ are considered to be \textit{equivalent} if (and only if) for every model $M$ and for every time point $t$, it holds that $A_1$ is true at $t$ in $M$ if $A_2$ it true at $t$ in $M$ and vice versa. We denote $A_1 \equiv A_2$
\end{definition}

Table~\ref{tab:semantics} does not specify the exact interpretation for ``$M,t \models A$ does \textit{not} hold''. Each system can decide what is understood as ``it does not hold''. As considered in the Introduction, closed systems often employ a Negation-as-Failure approach; when there is no conclusive evidence in favor of $M, t \models A$, it assumes $M, t \models A$ does not hold. Imagine for example the reservation list of a restaurant; if there is no evidence you do have a reservation, then the waiter assumes you do not have a reservation.
NAF is not always a desired approach, since it may lead to premature conclusions. Perhaps you do have a reservation, but the waiter's list is outdated. In more open-ended systems, e.g . IoT or the Web, it is preferable to only say $M, t \models A$ does not hold when there is explicit evidence that says so. For example, you may not have a reservation for eight people, since there is specific evidence against it: your reservation is for two people. 

% \section{Temporal operators}
% \label{sec:TempOperators}
% \input{TempDatalog/TemporalOperators_Definitions.tex}

\section{Rewriting operators}\label{sec:Theorems}

As indicated in Section~\ref{sec:Intro}, rewriting operators in function of other operators is a known practice. For example,~\cite{gutierrez-basulto_metric_2016}
reiterates the syntax of $\text{LTL}^{\text{bin}} _{\mathcal{ALC}}$, where operations $\diamond_I C$ and $\Box_I C$ are introduced merely as a shorthand for $\top \mathcal{U}_I C$ and $\neg\big(\diamond_I \neg C \big)$. Similarly,~\cite{finger_adding_1992} define their respective temporal operators solely in function of $\mathcal{U}$ and $\mathcal{S}$, i.e.~$\mathbf{F}( A ) := \mathcal{U}(A, \top)$ and $\mathbf{P} (A) := \mathcal{S}( A, \top)$ for propositional temporal logics. \cite{koymans_specifying_1990} notes the same relations hold in MTL: $\top \until_I A \equiv \diamondplus_I A$ and $\top \since_I A \equiv \diamondminus_I A$. These equivalencies migrate to other frameworks as well, as seen in DatalogMTL~\citep{brandt_querying_2018}. These results can be summarized as follows:

\begin{propositionCounted}\label{prop:diamond_si_ui}
    Suppose $t\in T$, $I$ a non-negative interval and $A$ a formula. Then $ \diamondplus_I A \equiv \top \Ui_I A$ and $\diamondminus_I A \equiv \top \Si_I A$.
\end{propositionCounted}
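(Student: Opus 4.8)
The plan is to prove each of the two equivalences directly from the model-theoretic semantics in Table~\ref{tab:semantics}, appealing to Definition~\ref{def:equivalence_metric_atoms}: I must show that for an arbitrary model $M$ and arbitrary time point $t$, each side holds at $t$ exactly when the other does. The two claims are mirror images of one another (future versus past), so I would carry out the argument in full for $\diamondplus_I A \equiv \top \Ui_I A$ and then note that $\diamondminus_I A \equiv \top \Si_I A$ follows by the symmetric reasoning, swapping $\Ui$ for $\Si$ and the interval $[t, t']$ for $[t', t]$.

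For the future case, I would fix $M$ and $t$ and expand $\top \Ui_I A$ by instantiating the \emph{Until} clause with first argument $\top$ and second argument $A$. This unfolds to: there is some $t'$ with $t' - t \in I$ such that $M, t' \models A$ and $M, t'' \models \top$ for all $t'' \in [t, t']$. The key observation is that the semantics of $\top$ give $M, t'' \models \top$ at every time point, so the universal conjunct is satisfied unconditionally and can be discarded. What remains is exactly the defining condition of $\diamondplus_I A$, namely that $M, t' \models A$ for at least one $t'$ with $t' - t \in I$. Since both formulae reduce to the identical existential condition, the two directions of the biconditional hold simultaneously, establishing the equivalence.

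I do not anticipate a genuine obstacle here: the result is essentially immediate once the two clauses are written side by side, and the only step requiring any care is recognizing that the ``context'' requirement carried by the $\Ui$ and $\Si$ operators collapses precisely because its first argument is $\top$. The sole point worth stating explicitly is that the witnessing time point $t'$ is shared between the two formulations, so no separate construction of a witness is needed in either direction of the argument.
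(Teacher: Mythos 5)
Your proof is correct. Note that the paper itself offers no proof of this proposition: it is presented as a summary of known results, with citations to \cite{koymans_specifying_1990}, \cite{finger_adding_1992} and \cite{brandt_querying_2018}. Your direct unfolding of the \emph{Until}/\emph{Since} clauses from Table~\ref{tab:semantics}, observing that the universal conjunct over $[t,t']$ (resp.\ $[t',t]$) collapses because its argument is $\top$, is exactly the standard argument one would supply, and it matches the style of the paper's own proof of Theorem~\ref{th:AlwaysAfter2}, where the same collapse of the $\top$ conjunct is the key step.
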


In~\cite{gutierrez-basulto_metric_2016}, the MTL operator $\boxplus A$ is defined via the equivalence $\boxplus A \equiv \neg (\diamondplus \neg A)$. This formulation relies on the presence of negation to express ``always $A$''. However, in the absence of negation, such an equivalence cannot be reproduced within MTL, which raises a natural question: if we exclude negation $\neg$ \textit{and} the ``always'' operators $\boxplus$ and $\boxminus$ from the syntax of MTL, does this result in a strictly less expressive fragment compared to the full language?

The following theorems and proofs demonstrate that the answer is negative: even without these operators, the ability of MTL to express temporal relations remains intact. This finding challenges the intuition that negation is essential for expressing universal temporal properties and shows that such constructs can be captured through alternative means.

\begin{theorem}\label{th:AlwaysAfter2}
    Suppose $ t$ a time point, $I = [ i_1, i_2]$ a non-negative interval and $A$ a formula. Then 

    \begin{align*}
        \boxplus_I A &\equiv \diamondplus_{[i_1, i_1]} \Big( A \mathcal{U}_{[ i_2 - i_1, i_2 - i_1]} \top \Big)
        % \\ &
        \equiv \top \mathcal{U}_{[i_1, i_1]} \Big( A \mathcal{U}_{[ i_2 - i_1, i_2 - i_1]} \top \Big).
    \end{align*}
    
\end{theorem}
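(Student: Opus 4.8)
The plan is to prove the first equivalence by unfolding Table~\ref{tab:semantics} on both sides and verifying that each holds at a pair $(M,t)$ exactly when $A$ is satisfied at every point of the closed interval $[t+i_1, t+i_2]$; the second equivalence then falls out of Proposition~\ref{prop:diamond_si_ui}.

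First I would fix a model $M$ and a time point $t$ and analyze the inner formula $A \mathcal{U}_{[i_2-i_1,i_2-i_1]} \top$ at an arbitrary point $s$. Note that $i_2 - i_1 \ge 0$, since $I$ is a non-negative interval with $i_1 \le i_2$, so $[i_2-i_1, i_2-i_1]$ is itself a legitimate non-negative (singleton) interval. By the semantics of $\mathcal{U}$, satisfaction at $s$ requires a witness $t'$ with $t' - s \in [i_2-i_1, i_2-i_1]$ at which $\top$ holds, together with $A$ at every point of $[s, t']$. Because the subscript is the singleton $\{i_2-i_1\}$, the witness is forced to be $t' = s + (i_2 - i_1)$, and because $\top$ holds everywhere this witness imposes no extra constraint. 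Hence $M, s \models A \mathcal{U}_{[i_2-i_1,i_2-i_1]}\top$ iff $A$ holds at every point of $[s, s+(i_2-i_1)]$.

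Next I would feed this into the outer $\diamondplus_{[i_1,i_1]}$. Its semantics demand a point $s$ with $s - t \in [i_1,i_1]$ at which the inner formula holds; again the singleton subscript pins $s = t + i_1$. Combining with the previous step, the left disjunct holds at $(M,t)$ iff $A$ is satisfied throughout $[t+i_1, (t+i_1)+(i_2-i_1)] = [t+i_1, t+i_2]$. This is verbatim the condition for $M, t \models \boxplus_I A$, namely $A$ at every $t'$ with $t' - t \in [i_1,i_2]$, establishing the first equivalence.

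Finally, the second equivalence is immediate: applying Proposition~\ref{prop:diamond_si_ui} with interval $[i_1,i_1]$ and formula $A \mathcal{U}_{[i_2-i_1,i_2-i_1]}\top$ yields $\diamondplus_{[i_1,i_1]}(A \mathcal{U}_{[i_2-i_1,i_2-i_1]}\top) \equiv \top \mathcal{U}_{[i_1,i_1]}(A \mathcal{U}_{[i_2-i_1,i_2-i_1]}\top)$, which is exactly the claimed rewriting. I expect the main obstacle to be the interval bookkeeping rather than any deep argument: one must check that the two nested singleton subscripts compose so that the length-$(i_2-i_1)$ run of $A$ starting at $t+i_1$ lands precisely on the endpoint $t+i_2$, and one must argue cleanly that the existential witnesses built into $\mathcal{U}$ and $\diamondplus$ collapse to unique forced points, precisely because the intervals are singletons and $\top$ is universally true.
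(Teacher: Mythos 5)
Your proposal is correct and follows essentially the same route as the paper's own proof: unfolding the semantics of Table~\ref{tab:semantics}, using the singleton subscripts $[i_1,i_1]$ and $[i_2-i_1,i_2-i_1]$ to force the unique witnesses $t+i_1$ and $t+i_2$, observing that $\top$ imposes no constraint, and invoking Proposition~\ref{prop:diamond_si_ui} for the second equivalence. The only difference is cosmetic (you analyze the inner $\mathcal{U}$-formula at an arbitrary point before composing with the outer $\diamondplus$, whereas the paper unfolds outside-in), so no further comparison is needed.
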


\begin{proof}
    Consider an arbitrary model $M$ and time point $t$ 

    \begin{align*}
        & && M, t \models \diamondplus_{[i_1, i_1]} \Big( A \mathcal{U}_{[ i_2 - i_1, i_2 - i_1]} \top \Big) \\
        &\Leftrightarrow &&\exists t' \in [ t + i_1 , t + i_1] : M, t' \models A \mathcal{U}_{[ i_2 - i_1, i_2 - i_1]} \top \\
        &\Leftrightarrow &&\exists t' \in [ t + i_1 , t + i_1] : 
        \\ & &&
        \Big( \exists t^{(2)} \in [ t' + (i_2 - i_1),  t' + (i_2 - i_1) ] :
        % \\ & &&
        ( M, t^{(2)} \models \top ) \land ( \forall t^{(3)} \in [ t', t^{(2)}] : M, t^{(3)} \models A) \Big)\\
        &\Leftrightarrow && \exists t^{(2)} \in [ t + i_2 ,  t + i_2 ] : 
        % \\ & &&
        \Big( ( t^{(2)} \models \top ) \land ( \forall t^{(3)} \in [ t + i_1, t^{(2)}] : t^{(3)} \models A) \Big)\\
        &\Leftrightarrow && \forall t^{(3)} \in [ t + i_1, t + i_2 ] : t^{(3)} \models A \\
        &\Leftrightarrow && t\models \boxplus_{I} A
    \end{align*}
    \vspace{2pt}

    The proof works as follows. In order for the equivalence to hold, the two formulae  must hold (and not hold) simultaneously for every model and time point. If this can be proven for an arbitrary pair of model and time point, it must hold for every pair. Hence, we consider an unspecified model $M$ and time point $t$ for which $M, t \models \diamondplus_{[i_1, i_1]} \big( A \Ui_{[i_2 - i_1, i_2 - i_1]} \top \big) $. Following the semantics of Section~\ref{sec:background}, there exists a future time point $t'$, such that $t'-t  \in [i_1,i_1]$, in other words $ t' = t+ i_1$, for which $M, t' \models A \Ui_{[i_2 - i_1, i_2 - i_1]} \top$. By the semantics of the $\until$ operator, this is equivalent to first the existence of a time point $t^{(2)}$ for which $t^{(2)} - t' \in [i_2 - i_1, i_2 - i_1]$, in other words $t^{(2)} = t' + (i_2 - i_1)$. By substituting the fact that $ t' = t+ i_1$, we get $t^{(2)} = t + i_2$. For this $t^{(2)}$ $M , t^{(2)} \models \top$ holds. Second, it entails that for every $t^{(3)} $ between $t'$ and $t^{(2)}$, it holds that $M, t^{(3)} \models A $. Since $\top$ holds for any $M$ and $t^{(2)}$ and $t^{(2)}$ can only be $t + i_2$, this is equivalent to $M, t^{(3)} \models A$ for every $t^{(3)}$ in interval $[t + i_1, t + i_2]$. In other words, $M, t \models \boxplus_{[i_1, i_2]} A$. It is easily verified each step likewise holds in the opposite direction.
    Lastly, the second equivalence follows from Proposition~\ref{prop:diamond_si_ui} by substituting the $\diamondplus$ operator.
\end{proof}

Analogously, we obtain an analogous result for $\boxminus$:

\begin{theorem}\label{th:AlwaysBefore2}
    Suppose $ t$ a time point, $I = [ i_1, i_2]$ a non-negative interval and $A$ a formula. Then 
    \begin{align*}
        \boxminus_I A &\equiv \diamondminus_{[i_1, i_1]} \Big( A \mathcal{S}_{[ i_2 - i_1, i_2 - i_1]} \top \Big)
        % \\ &
        \equiv \top \mathcal{S}_{[i_1, i_1]} \Big( A \mathcal{S}_{[ i_2 - i_1, i_2 - i_1]} \top \Big).
    \end{align*}
\end{theorem}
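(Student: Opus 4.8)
The plan is to mirror the proof of Theorem~\ref{th:AlwaysAfter2} almost verbatim, exploiting the reflection symmetry between the future-directed operators $\diamondplus, \until$ and their past-directed counterparts $\diamondminus, \since$: every forward time shift of the form $t' - t$ is replaced by a backward shift $t - t'$. Concretely, I would fix an arbitrary model $M$ and time point $t$, assume $M, t \models \diamondminus_{[i_1, i_1]}\big( A \since_{[i_2 - i_1, i_2 - i_1]} \top \big)$, and unwind the semantics one operator at a time, checking at the end that each step is in fact an equivalence.

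First, by the semantics of the past ``once'' operator, the outer $\diamondminus_{[i_1,i_1]}$ supplies a witness $t'$ with $t - t' \in [i_1, i_1]$, i.e.\ $t' = t - i_1$, at which $M, t' \models A \since_{[i_2 - i_1, i_2 - i_1]} \top$. Next, applying the semantics of $\since$ at $t'$, there must exist $t^{(2)}$ with $t' - t^{(2)} \in [i_2 - i_1, i_2 - i_1]$, i.e.\ $t^{(2)} = t' - (i_2 - i_1)$, such that $M, t^{(2)} \models \top$ and $M, t^{(3)} \models A$ for every $t^{(3)} \in [t^{(2)}, t']$. Substituting $t' = t - i_1$ gives $t^{(2)} = t - i_2$, and since $\top$ holds everywhere the existential condition on $t^{(2)}$ is vacuous, forcing $t^{(2)} = t - i_2$.

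What remains is precisely ``$M, t^{(3)} \models A$ for all $t^{(3)} \in [t - i_2, t - i_1]$'', which by the semantics of $\boxminus$ is exactly $M, t \models \boxminus_{[i_1, i_2]} A$, because the set of points $t^{(3)}$ satisfying $t - t^{(3)} \in [i_1, i_2]$ is precisely $[t - i_2, t - i_1]$. As in Theorem~\ref{th:AlwaysAfter2}, every implication above is reversible, so the chain also runs backwards and establishes the equivalence. The second equivalence, which replaces $\diamondminus_{[i_1,i_1]}$ by $\top \since_{[i_1,i_1]}$, then follows immediately from Proposition~\ref{prop:diamond_si_ui}.

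The only genuine point of care — and the step most prone to sign errors — is the bookkeeping of the time direction. Because we now move backward in time, the ``since'' window runs from $t^{(2)}$ up to $t'$ rather than from $t'$ up to $t^{(2)}$, so the governed interval must be written with its smaller endpoint first, as $[t - i_2, t - i_1]$, and the composed offsets $i_1$ and $i_2 - i_1$ must be subtracted (not added) from $t$. Once this orientation is fixed, everything else is a routine transcription of the forward argument.
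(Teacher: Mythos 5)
Your proposal is correct and matches the paper's intent exactly: the paper gives no separate proof for Theorem~\ref{th:AlwaysBefore2}, presenting it as the time-reversed analogue of Theorem~\ref{th:AlwaysAfter2}, and your argument is precisely that mirrored proof spelled out, with the witnesses $t' = t - i_1$ and $t^{(2)} = t - i_2$ and the governed interval $[t - i_2, t - i_1]$ all handled correctly. The final appeal to Proposition~\ref{prop:diamond_si_ui} for the second equivalence is also the same as in the paper.
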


Following Theorems~\ref{th:AlwaysAfter2} and~\ref{th:AlwaysBefore2}, any negation-free formula in BMTL has an equivalent formula free of operators $\boxplus$ and $\boxminus$. Combined with Proposition~\ref{prop:diamond_si_ui}, a negation-free BMTL fragment can be defined using only $\since$ and $\until$; 

\begin{align*}
    A \coloneqq  p & | \top | A \land A | A \Si_{I} A | A \Ui_{I} A \\
    % & | \diamondplus_{[i1, i1]} (A \Ui_{[i2-i1,i2-i1]} \top) | \diamondminus_{[i1, i1]} (A \Si_{[i2-i1,i2-i1]} \top)
\end{align*}

In this fragment, the same temporal relations as in full MTL can be expressed. The only compromise is situated within the non-temporal relations; $A_1 \lor A_2$, often achieved via $\neg( \neg A_1 \land \neg A_2)$ can not be expressed in negation-free BMTL.

Theorems~\ref{th:AlwaysAfter2} and~\ref{th:AlwaysBefore2} swap out interval $[i_1, i_2]$ for two new intervals $[ i_1, i_1 ]$ and $[ i_2 - i_1, i_2 - i_1 ]$. Important to note is these newly introduced intervals are both singleton intervals.
% Furthermore, note that the proposed equivalencies make use of singleton intervals, i.e. intervals which only consist of a single value.
Both~\cite{alur_benefits_1996} and~\cite{brandt_querying_2018} point out that allowing these singleton intervals in our definitions may lead to counter-intuitive behaviour of the logic; 
in dense time domains, requiring an event to occur exactly at time 
$t$ (as opposed to within an interval) can make the satisfaction of a formula fragile: a minuscule perturbation in the timestamp can cause the formula to no longer hold. This is particularly problematic in real-world systems where timestamps may be imprecise or noisy. Moreover, in the context of model checking and satisfiability, allowing punctual constraints leads to undecidability, as shown in~\cite{alur_benefits_1996} and~\cite{ouaknine_some_2008}.
~\cite{alur_benefits_1996} introduces the singleton-free Metric Interval Logic (MITL) with the specific intent of obtaining a decidable fragment of MTL, following the observation that MTL is undecidable. 
Theorems~\ref{th:AlwaysAfter2} and~\ref{th:AlwaysBefore2} are therefore not applicable to MITL and its subfragments, for example CFMTL~\citep{ouaknine_some_2008}.

The following theorems offer alternative equivalences that do not employ singleton intervals. Naturally, the restriction increases the intricacy of the matter, resulting in equivalencies and proofs that stray further from intuition than the previous theorems.

The idea of Theorem~\ref{th:AlwaysAfter} is based on the following observation:  $\boxplus_{[i_1, i_2]} A$ at time $t$ means $A$ holds over interval $[t+i_1, t+i_2]$, meaning $A$ started somewhere before, or at the latest at, $t+i_1$ and end at the earliest at $t + i_2$. We split this expression into two parts, one that states $A$ starts holding at the latest at $t + i_1$ and another that states $A$ stops holding at the earliest at $t + i_2$, with the condition that these two parts must remain connected. For the first part, we postulate that $A$ needs to hold for at least a length $i_2 - i_1$, starting somewhere between $t + i_1 - \frac{i_2 - i_1}{2}$ and $t+i_1$, meaning $A$ will hold until at least $t + i_1 + \frac{i_2 - i_1}{2}$, the halfway point of the interval $[t+i_1, t+i_2]$. Likewise, the second part states $A$ should hold for $i_2 - i_1$ time points, since at least $t + i_1 + \frac{i_2 - i_1}{2}$. These two parts are illustrated in Figure~\ref{fig:example_timeline}.

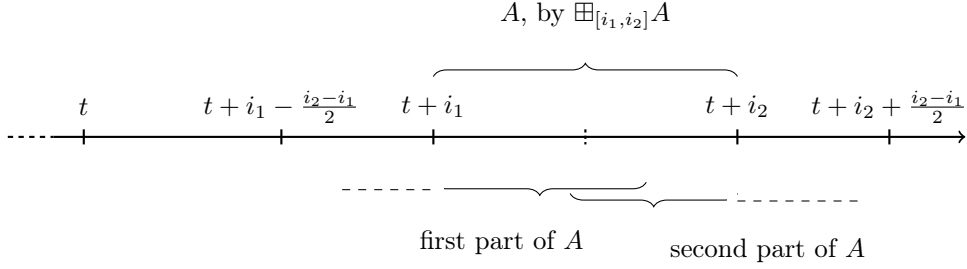
\begin{figure}[ht]
    \centering
    \begin{tikzpicture}[xscale=0.2, yscale=1]
        % Timeline with arrow
        \draw[thick, ->] (0, 0) -- (60,0);

        % Dashed segments
        \draw[thick,dashed, dash pattern=on 2pt off 2pt] (-3,0) -- (0,0);

        % Ticks and labels
        % \foreach \x/\label in {0/0, 3/2, 11/10, 14/12, 22/20, 25/22, 34/32, 38/36} {
        %     \draw[thick] (\x,0.1) -- (\x,-0.1);
        %     \node at (\x,0.4) {\label};
        % }

        \draw[thick] (2,0.1) -- (2,-0.1);
        \node at (2,0.4) {$t$};

        \draw[thick] (25,0.1) -- (25,-0.1);
        \node at (25,0.4) {$t + i_1$};
        \draw[thick] (45,0.1) -- (45,-0.1);
        \node at (45,0.4) {$t + i_2$};

        \draw[thick, dashed, dash pattern=on 1.2pt off 1.2pt] (35,0.1) -- (35,-0.1);

        \draw[thick] (15,0.1) -- (15,-0.1);
        \node at (15,0.4) {$t + i_1 - \frac{i_2 - i_1}{2}$};
        \draw[thick] (55,0.1) -- (55,-0.1);
        \node at (55,0.4) {$t + i_2 + \frac{i_2 - i_1}{2}$};

        % \draw[thick] (19,0) ellipse (0.3 and 0.15);
        % \node at (19,-0.4) {$t^-$};

        % \draw[thick] (54,0) ellipse (0.3 and 0.15);
        % \node at (54,-0.4) {$t^+$};

        \draw [decorate,decoration={brace,amplitude=5pt},yshift=+0.7cm]
            (25,0.1) -- (45,0.1) node [black,midway,yshift=+0.8cm] {$A$, by $\boxplus_{[i_1, i_2]} A$};
        
            % \draw [decorate,decoration={brace,amplitude=5pt, mirror},yshift=-0.7cm]
            %     (19,0.1) -- (39,0.1) node [black,midway,yshift=-0.8cm] {first part of $A$};
        
        % Dashed line from 19 to 25
        \draw [dashed]
            (19,-0.7) -- (25,-0.7);
    
        % Solid brace from 25 to 39 with label
        \draw [decorate,decoration={brace,amplitude=5pt,mirror},yshift=-0.7cm]
        (25,0.1) -- (39,0.1) node [black,midway,yshift=-0.8cm, xshift=-0.5cm] {first part of $A$};

        % White rectangle to hide the left curl
        \fill[white] (25,-0.2) rectangle (25.7,-1.0);

        % \draw [decorate,decoration={brace,amplitude=5pt, mirror},yshift=-0.85cm]
            % (34,0.1) -- (54,0.1) node [black,midway,yshift=-0.65cm] {second part of $A$};

        % Dashed line from 45 to 54
        \draw [dashed]
            (45,-0.85) -- (53,-0.85);
    
        % Solid brace from 34 to 45 with label
        \draw [decorate,decoration={brace,amplitude=5pt,mirror},yshift=-0.85cm]
        (34,0.1) -- (45,0.1) node [black,midway,yshift=-0.75cm, xshift=1.5cm] {second part of $A$};

        % White rectangle to hide the left curl
        \fill[white] (44.35,-0.2) rectangle (45,-1.0);
        
    \end{tikzpicture}
    \caption{Illustration of Proof~\ref{th:AlwaysAfter2}. The accolades of parts 1 and 2 have a set length of $i_2 - i_1$, but shift according to the exact location of $t^-$ and $t^+$. Regardless of the exact location of $t^-$ and $t^+$, the two accolades will always produce an area spanning at least $[t+i_1, t+i_2]$ in which $A$ holds everywhere.}
    \label{fig:example_timeline}
\end{figure}

In the following proof, we show these two parts translate to expressions without box operators or singleton intervals.

\begin{theorem}
    \label{th:AlwaysAfter}
    % Suppose $t\in T$, $I = [i_1, i_2]$ an interval, $\kappa, \lambda \in T_{\geq 0} $ and $A$ a formula. Then 

    Suppose $ t$ a time point, $I = [ i_1, i_2]$ a non-negative interval, $\kappa$ and $\lambda$ positive time points and $A$ a formula. Then 

    \begin{align*}
        \boxplus _{[i_1, i_2]} A \equiv 
        &\Big( \diamondplus_{[\frac{3i_1 - i_2}{2}, i_1]} \big( A \mathcal{U}_{[i_2 - i_1, i_2 - i_1 + \kappa]} \top \big) \Big) 
        % \\ &
        \land
        \Big(\diamondplus_{[i_2, \frac{3i_2 - i_1}{2}]} \big( A \mathcal{S}_{[i_2 - i_1, i_2 - i_1 + \lambda]} \top \big) \Big)
    \end{align*}

\end{theorem}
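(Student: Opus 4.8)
The plan is to establish the equivalence by proving both directions of the biconditional directly from the semantics of Table~\ref{tab:semantics}, following the geometric decomposition sketched in Figure~\ref{fig:example_timeline}. Write $d = i_2 - i_1$ for the length of the interval and $m = t + i_1 + \tfrac{d}{2} = t + \tfrac{i_1+i_2}{2}$ for the midpoint of the target window $[t+i_1, t+i_2]$. First I would rewrite the two outer offset intervals in these terms: $[\tfrac{3i_1-i_2}{2}, i_1] = [i_1 - \tfrac{d}{2}, i_1]$ and $[i_2, \tfrac{3i_2-i_1}{2}] = [i_2, i_2 + \tfrac{d}{2}]$, so that the first conjunct places a witness in the half-window just left of $t+i_1$ and the second in the half-window just right of $t+i_2$.

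For the forward direction ($\Rightarrow$), assume $M, t \models \boxplus_{[i_1,i_2]} A$, i.e.\ $A$ holds throughout $[t+i_1, t+i_2]$. To witness the first conjunct I would take $t^- = t+i_1$ (the right endpoint of the outer interval) for $\diamondplus$ and $t_u = t+i_2$ for the inner $\mathcal{U}$; then $t_u - t^- = d \in [d, d+\kappa]$ and $A$ holds on $[t+i_1, t+i_2]$, as required. Symmetrically, for the second conjunct I would take $t^+ = t+i_2$ and $t_s = t+i_1$, so that $t^+ - t_s = d \in [d, d+\lambda]$ and again $A$ holds on $[t_s, t^+]$. This direction is routine because we may place every witness on the boundary of the target window.

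The backward direction ($\Leftarrow$) is the crux. Assume both conjuncts hold at $t$. From the first, unfolding $\diamondplus$ and then $\mathcal{U}$ yields some $t^- \in [t+i_1-\tfrac{d}{2}, t+i_1]$ together with a point $t_u$ with $t_u - t^- \geq d$ such that $A$ holds on all of $[t^-, t_u] \supseteq [t^-, t^-+d]$. Since $t^- \leq t+i_1$ and $t^- + d \geq (t+i_1-\tfrac{d}{2}) + d = m$, this interval necessarily covers $[t+i_1, m]$, whatever the exact value of $t^-$. Dually, the second conjunct yields $t^+ \in [t+i_2, t+i_2+\tfrac{d}{2}]$ and $t_s$ with $t^+ - t_s \geq d$ and $A$ holding on $[t_s, t^+] \supseteq [t^+-d, t^+]$; since $t^+ \geq t+i_2$ and $t^+ - d \leq (t+i_2+\tfrac{d}{2}) - d = m$, this covers $[m, t+i_2]$. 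The two half-windows meet exactly at $m$, so $A$ holds throughout $[t+i_1, t+i_2]$, i.e.\ $M, t \models \boxplus_{[i_1,i_2]} A$.

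The main obstacle, and the reason the argument is more delicate than in Theorem~\ref{th:AlwaysAfter2}, is precisely this worst-case coverage step: because the offsets are now ranges rather than singletons, the witnesses $t^-, t^+, t_u, t_s$ are not pinned down, so I must verify that \emph{every} admissible placement still forces $A$ across the whole window. The calculation shows this works exactly because the outer half-windows have radius $\tfrac{d}{2}$ and the inner $\mathcal{U}$/$\mathcal{S}$ demand a minimum length $d = i_2 - i_1$: these two figures are tuned so that the left part is guaranteed up to $m$ and the right part guaranteed from $m$, with the meeting point coinciding. The slack parameters $\kappa$ and $\lambda$ only relax the upper bounds of the inner operators and play no role in the coverage; one should also note the implicit well-formedness requirement $\tfrac{3i_1-i_2}{2} \geq 0$ (equivalently $i_2 \leq 3i_1$) needed for the first outer interval to be a legal non-negative interval.
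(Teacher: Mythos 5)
Your proof is correct and follows essentially the same route as the paper's own argument: the same decomposition of $[t+i_1, t+i_2]$ into two halves meeting at the midpoint $m = t + \tfrac{i_1+i_2}{2}$, with the first conjunct forcing $A$ on $[t+i_1, m]$ and the second forcing $A$ on $[m, t+i_2]$ in the backward direction, and the same boundary witnesses ($t^- = t+i_1$, $t_u = t+i_2$, $t^+ = t+i_2$, $t_s = t+i_1$) in the forward direction. Your closing observation that the theorem implicitly needs $\tfrac{3i_1-i_2}{2} \geq 0$ for the first outer interval to be a legal non-negative interval is a genuine point the paper does not address, and is worth keeping.
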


\begin{proof}
    
   Let $\kappa$ and $\lambda$ be arbitrary but fixed positive time points. We consider the following two formulae:

\begin{align}
\diamondplus_{\left[\frac{3i_1 - i_2}{2}, i_1\right]} \left( A \mathcal{U}_{[i_2 - i_1, i_2 - i_1 + \kappa]} \top \right) \label{eq:OnceUntil} \\
\diamondplus_{\left[i_2, \frac{3i_2 - i_1}{2}\right]} \left( A \mathcal{S}_{[i_2 - i_1, i_2 - i_1 + \lambda]} \top \right) \label{eq:OnceSince}
\end{align}

If for a given model $M$, formula~\eqref{eq:OnceUntil} holds at time $t$ in $M$, the semantics assert that there exists a time point within the interval $\left[ t + \frac{3  i_1 - i_2}{2}, t + i_1\right]$, say $t^-$, at which the formula $A \mathcal{U}_{[i_2 - i_1, i_2 - i_1 + \kappa]} \top$ holds. This means that, starting from $t^-$, $A$ must hold continuously for a duration of at least $i_2 - i_1$ time units, until a point is reached 
% (between $i_2 - i_1$ and $i_2 - i_1 +\kappa$)
where the trivially true formula $\top$ holds in $M$.

Dually, formula~\eqref{eq:OnceSince} captures a symmetric condition. It states that if the formula~\eqref{eq:OnceSince} holds in $M$ at time $t$, there exists a time point $t^+$ within the interval $\left[t + i_2, t + \frac{3i_2 - i_1}{2}\right]$ such that the formula $A \mathcal{S}_{[i_2 - i_1, i_2 - i_1 + \lambda]} \top$ holds in $M$. This implies that, looking backward from $t^+$, the proposition $A$ must have held continuously for a duration of at least $i_2 - i_1$, starting from a point where $\top$ is satisfied.

\vspace{0.5cm}

Observe that the lower bound of the interval $\left[ t + \frac{3  i_1 - i_2}{2}, t + i_1\right]$ can be rewritten as 

\begin{align*}
 t + \frac{3i_1 - i_2}{2} &= t + \frac{3i_1}{2} - \frac{i_2}{2} = t + i_1  + \frac{i_1}{2} - \frac{i_2}{2} \\
 &= t + i_1 + \frac{i_1 - i_2}{2} = t + i_1 - \frac{i_2 - i_1}{2} \\
\end{align*}

This latter expression identifies the time point that lies exactly one interval length of $i_2 - i_1$ before the halfway point of the interval $[t + i_1, t + i_2]$. 

From any time point within the range $[t + i_1 - \frac{i_2 - i_1}{2},\ t + i_1]$, the formula in Equation~\eqref{eq:OnceUntil} requires a forward temporal progression (a “jump”) of at least $i_2 - i_1$ time units. This jump begins no later than $t + i_1$ and, in the earliest case, lands at $t + i_1 - \frac{i_2 - i_1}{2} + (i_2 - i_1) = t + i_1 + \frac{i_2 - i_1}{2}$, i.e.~the midpoint of the interval $[t + i_1 , t + i_2]$. Since the proposition $A$ is required to hold throughout the entire duration of this jump, it follows that $A$ must hold continuously over at least the first half of the interval $[t+ i_1, t+ i_2]$, that is, from $t + i_1$ to $t + i_1 + \frac{i_2 - i_1}{2}$. It is important to note that this condition provides no information on the truth value of $A$ prior to $t + i_1$, nor beyond $t + i_1 + \frac{i_2 - i_1}{2}$. 
% The requirement is strictly confined to the temporal segment traversed by the jump.

\vspace{0.5cm}

The same method of interpretation applies to formula \eqref{eq:OnceSince}. In this case the upper bound interval can be similarly  rewritten, i.e. 

\begin{align*}
 t + \frac{3i_2 - i_1}{2} 
 &= t + \frac{3i_2}{2} - \frac{i_1}{2} 
 = t + i_2  + \frac{i_2}{2} - \frac{i_1}{2} = t + i_2 + \frac{i_2 - i_1}{2} \\
\end{align*}

From any time point within the range $[t + i_2, t + i_2 + \frac{i_2 - i_1}{2}]$, the formula in Equation~\eqref{eq:OnceSince} requires a backwards ``jump'' of at least $i_2 - i_1$ time units. As such, this jump back ends at the latest at $t + i_2$ and starts at the latest at $t + i_2 + \frac{i_2 - i_1}{2} - (i_2 - i_1) = t + i_2 - \frac{i_2 - i_1}{2}$, i.e.~the midpoint of the interval $[t+ i_1, t+ i_2]$. Again, A must continuously hold during this jump. As such, $A$ holds continuously  from $t + i_2 - \frac{i_2 - i_1}{2}$ to $t + i_2$, which is the latter half of the interval $[t+ i_1, t+ i_2]$. 

\vspace{0.5cm}

As a result, if the the conjunction of~\eqref{eq:OnceUntil} and~\eqref{eq:OnceSince} holds for an model $M$ at time point $t$, then $A$ holds in $M$ over (at least) the interval $[ t + i_1 , t+ i_2]$, i.e. $M, t \models \boxplus_{[i_1, i_2]} A$.

\vspace{0.5cm}

Suppose that $M, t \models \boxplus_{[i_1, i_2]} A$. By the semantics of the $\boxplus$ operator, $A$ holds at every time point within the interval $[t + i_1, t + i_2]$
% , i.e., $\forall t'' \in [t + i_1, t + i_2] : M, t'' \models A$
    . Furthermore, since $\top$ is valid at all time points, it trivially holds that $M, t + i_2 \models \top$. Observe that $t + i_2$ lies within the interval $[t + i_2, t + i_2 + \kappa]$, and thus the condition $A \mathcal{U}_{[i_2 - i_1, i_2 - i_1 + \kappa]} \top$ is satisfied at time $t + i_1$. Moreover, since $t + i_1$ belongs to the interval $[t + \frac{3i_1 - i_2}{2}, t + i_1]$, it follows that formula~\eqref{eq:OnceUntil} holds at time $t$.

\vspace{0.5cm}

We now turn our attention to Equation~\eqref{eq:OnceSince}. Note that $t + i_1$ lies within the interval $[t + i_1 - \lambda, t + i_1]$, which can be equivalently expressed as $[(t + i_2) - (i_2 - i_1 + \lambda), (t + i_2) - (i_2 - i_1)]$. Given that $A$ holds throughout $[t + i_1, t + i_2]$ and $\top$ holds at $t + i_2$, it follows that $M, t + i_2 \models A \mathcal{S}_{[i_2 - i_1, i_2 - i_1 + \lambda]} \top$, and consequently formula~\eqref{eq:OnceSince} holds in $M$ at time $t$. Since $t + i_2$ lies within the interval $[t + i_2, t + \frac{3i_2 - i_1}{2}]$, we conclude that there exists a time point $t^+$ in the interval $[t+i_2, t + \frac{3i_2 - i_1}{2}]$ for which $A \Si_{[i_2 - i_1, i_2 -i_1 +\lambda]} \top$ holds in $M$ at time $t^+$. As a result, formula~\eqref{eq:OnceSince} holds in $M$ at time point $t$.

\vspace{0.5cm}

$M, t \models \boxplus_{[i_1, i_2]} A$ thus induces that both formulae~\eqref{eq:OnceSince} and~\eqref{eq:OnceUntil} hold in $M$ at time $t$, and therefore their conjunction as well.
    
\end{proof}

Analogously, $\boxminus_{[i_1, i_2]} A$ can be rewritten into a similar expression.

\begin{theorem}
    \label{th:AlwaysBefore}
    Suppose $t\in T$, $I = [i_1, i_2]$ a non-negative interval, $\kappa, \lambda$ positive time points and $A$ a formula. Then

    \begin{align*}
         \boxminus _{[i_1, i_2]} A \equiv 
        &\Big( \diamondminus_{[\frac{3i_1 - i_2}{2}, i_1]} \big( A \mathcal{S}_{[i_2 - i_1, i_2 - i_1 + \lambda]} \top \big) \Big) 
        % \\ &
        \land \Big(\diamondminus_{[i_2, \frac{3i_2 - i_1}{2}]} \big( A \mathcal{U}_{[i_2 - i_1, i_2 - i_1 + \kappa]} \top \big) \Big)
    \end{align*}
\end{theorem}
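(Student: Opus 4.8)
The plan is to mirror the interval-covering argument of Theorem~\ref{th:AlwaysAfter}, reflected into the past. Indeed, the present statement is the time-reversed dual of Theorem~\ref{th:AlwaysAfter}, obtained by the substitutions $\diamondplus \leftrightarrow \diamondminus$ and $\mathcal{U} \leftrightarrow \mathcal{S}$; the roles of the constants $\kappa$ and $\lambda$ are merely interchanged, which is immaterial since both are arbitrary positive time points. I would fix an arbitrary model $M$ and time point $t$, recall that $M, t \models \boxminus_{[i_1,i_2]} A$ means exactly that $A$ holds at every point of the past interval $[t - i_2, t - i_1]$, and show that each of the two conjuncts secures one half of this interval, the two halves meeting precisely at the midpoint $t - \tfrac{i_1 + i_2}{2}$.

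First I would unwind the conjuncts using the arithmetic already recorded in the proof of Theorem~\ref{th:AlwaysAfter}. Since $t - \tfrac{3i_1 - i_2}{2} = t - i_1 + \tfrac{i_2 - i_1}{2}$, the operator $\diamondminus_{[\frac{3i_1-i_2}{2}, i_1]}$ posits a witness $t^{+} \in [\, t - i_1,\ t - i_1 + \tfrac{i_2-i_1}{2}\,]$ at which $A \mathcal{S}_{[i_2 - i_1,\, i_2 - i_1 + \lambda]} \top$ holds; by the semantics of $\mathcal{S}$, and because the minimal admissible backward reach is $i_2 - i_1$, this forces $A$ to hold throughout $[\, t^{+} - (i_2 - i_1),\ t^{+}\,]$. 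Dually, since $t - \tfrac{3i_2 - i_1}{2} = t - i_2 - \tfrac{i_2 - i_1}{2}$, the operator $\diamondminus_{[i_2, \frac{3i_2-i_1}{2}]}$ posits a witness $t^{-} \in [\, t - i_2 - \tfrac{i_2-i_1}{2},\ t - i_2\,]$ at which $A \mathcal{U}_{[i_2 - i_1,\, i_2 - i_1 + \kappa]} \top$ holds, forcing $A$ throughout $[\, t^{-},\ t^{-} + (i_2 - i_1)\,]$.

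For the direction from the conjunction to $\boxminus_{[i_1,i_2]} A$, I would compute the coverage guaranteed regardless of where the existential witnesses land, i.e.\ the intersection of the possible $A$-intervals over all admissible $t^{+}$, respectively $t^{-}$. As $t^{+}$ ranges over its interval, every window $[\, t^{+} - (i_2 - i_1),\ t^{+}\,]$ contains $[\, t - \tfrac{i_1+i_2}{2},\ t - i_1\,]$, the right half of the box interval; symmetrically, every window $[\, t^{-},\ t^{-} + (i_2 - i_1)\,]$ contains $[\, t - i_2,\ t - \tfrac{i_1+i_2}{2}\,]$, the left half. Their union is precisely $[\, t - i_2,\ t - i_1\,]$, yielding $M, t \models \boxminus_{[i_1,i_2]} A$. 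For the converse, assuming $A$ holds on $[\, t - i_2,\ t - i_1\,]$, I would exhibit explicit witnesses: the choice $t^{+} = t - i_1$ with a backward jump of length $i_2 - i_1$ lands at $t - i_2$, where $\top$ trivially holds, and $A$ holds across $[\, t - i_2,\ t - i_1\,]$, satisfying the first conjunct; symmetrically $t^{-} = t - i_2$ with a forward jump of length $i_2 - i_1$ reaches $t - i_1$ and satisfies the second.

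I expect the main obstacle to be the coverage step in the backward direction. Because the ``once'' operators are existential, the witnesses $t^{+}$ and $t^{-}$ are not pinned down, so the argument must verify that, wherever they fall in their admissible ranges, the \emph{minimal} reach $i_2 - i_1$ alone already covers the corresponding half of $[\, t - i_2,\ t - i_1\,]$, and that the two halves abut exactly at $t - \tfrac{i_1+i_2}{2}$. This is precisely what dictates the bounds $\tfrac{3i_1 - i_2}{2}$ and $\tfrac{3i_2 - i_1}{2}$; the slack parameters $\lambda$ and $\kappa$ contribute nothing to the guarantee (only the lower endpoint $i_2 - i_1$ of each inner interval matters) and serve solely to keep those inner intervals non-singleton.
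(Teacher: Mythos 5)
Your proposal is correct and takes essentially the same approach as the paper: the paper's own proof of Theorem~\ref{th:AlwaysBefore} simply declares it ``analogous'' to Theorem~\ref{th:AlwaysAfter}, and your argument is exactly that time-reversed analogue carried out in full --- each conjunct's minimal reach of $i_2 - i_1$ covers one half of $[\,t-i_2,\ t-i_1\,]$, the halves meet at the midpoint $t - \tfrac{i_1+i_2}{2}$, and the converse is witnessed by the endpoints $t - i_1$ and $t - i_2$. Your unwinding of the semantics, the interval arithmetic, and the observation that only the lower endpoints $i_2 - i_1$ of the inner intervals matter (with $\kappa$, $\lambda$ serving only to avoid singleton intervals) all match the paper's reasoning for the future-directed case.
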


\begin{proof}
    The proof of Theorem \ref{th:AlwaysBefore} is analogous to that of Theorem \ref{th:AlwaysAfter} 
    % and is left as an exercise for the reader
    .
\end{proof}

It follows from Theorems~\ref{th:AlwaysAfter} and~\ref{th:AlwaysBefore} that any formula with bounded interval subscripts can be rewritten into formulae containing only the temporal operators $\diamondplus$, $\diamondminus$, $\Si$ and $\Ui$. Combined with Proposition~\ref{prop:diamond_si_ui}, we conclude that negation-free Bounded MTL and Bounded MITL can be defined using only temporal operators $\since$ and $\until$ without compromising on temporal expressivity compared to the full Bounded MTL and Bounded MITL respectively.

% \section{Discussion}\label{sec:Discussion}
% \input{TempDatalog/Discussion.tex}

\section{Discussion and conclusion}\label{sec:conclusion}

In this work, we have shown that the syntax of negation-free {MTL} with bounded intervals can be significantly simplified without compromising its expressive power. Specifically, we demonstrated that all occurrences of the temporal operators $\boxplus$ and $\boxminus$ can be syntactically eliminated in favor of combinations of $\diamondplus$, $\diamondminus$, $\mathcal{U}$, and $\mathcal{S}$. This is the case for negation-free BMTL, as follows from Theorems~\ref{th:AlwaysAfter2}-\ref{th:AlwaysBefore2}, but also for the more restrictive Bounded MITL, as illustrated by Theorems~\ref{th:AlwaysAfter}-\ref{th:AlwaysBefore}. Furthermore, we established that even the ``once'' operators $\diamondplus$ and $\diamondminus$ can be removed, yielding a fragment that relies solely on the $\Si$ and $\Ui$ operators. 

This result is particularly striking, given the absence of negation in the fragment under consideration. While classical transformations of ``always'' operators into ``once'' and negation are well-known in temporal logic, our findings reveal that similar reductions are possible even in negation-free settings. This challenges the common intuition that negation is essential for expressing ``always'' expressions. Given the influence of MTL on recent temporal reasoning frameworks, the obtained results can be carried over to MTL-based frameworks such as MEL and DatalogMTL.

Looking forward, several avenues for future research emerge.
From a theoretical perspective,  
it is an open question whether analogous syntactic reductions can be achieved in temporal reasoning frameworks with different temporal operators or semantics, for example LARS. This would prompt us to reconsider the theoretical (minimal) requirements of temporal logic frameworks for open-world infrastructures such as IoT and the Semantic Web. These theoretical results help us better understand the strengths and weaknesses of stream reasoning systems, which in turn allows us to better target research efforts towards more efficient, scalable, and predictable reasoning systems for dynamic and data-intensive environments.
 
Approached from a more practical angle, the expressions obtained in this paper can be leveraged in MTL-based frameworks; via the obtained expressions, the more complex temporal operators $\boxplus, \boxminus, \diamondplus, \diamondminus$ can be implemented as ``built-ins'' on top of the existing theory, to improve readability (and thereby user convenience) without intervention in the underlying theory. Finally, an interesting direction for future work is to explore whether similar syntactic simplifications can be extended to unbounded intervals (i.e., $[0, +\infty[$). While this would likely require a different methodological approach, it opens up promising possibilities for broadening the applicability of the results presented here.

\section{Acknowledgements}
This research is funded by the FRACTION project (Nr. G086822N), funded by the Fonds voor Wetenschappelijk Onderzoek (FWO) organization.

\newpage
% \appto{\bibsetup}{\sloppy}
% \printbibliography

\bibliography{biblio.bib}

\end{document}